\newtheorem{mytheorem}{Theorem}
\title{Upper Bounds and Duality Relations of the Linear Deterministic Sum Capacity for Cellular Systems}
\author{
\IEEEauthorblockN{Rick Fritschek}
\IEEEauthorblockA{ Lehrstuhl f\"ur Informationstheorie 
    und \\ Theoretische
    Informationstechnik\\
    Technische Universit\"at Berlin, \\
    Einsteinufer 25,
    D--10587 Berlin, Germany\\
    Email: rick.fritschek@tu-berlin.de
}%
\and
\IEEEauthorblockN{Gerhard Wunder}
\IEEEauthorblockA{Fraunhofer
  Heinrich--Hertz--Institut\\
Wireless Communication and Networks \\
Einsteinufer 37, D--10587 Berlin, Germany\\
Email: gerhard.wunder@hhi.fraunhofer.de}

}
\begin{document}

\maketitle
\begin{abstract}
The MAC-BC duality of information theory and wireless communications is an intriguing concept for efficient algorithm design. However, no concept is known so far for the important cellular channel. To make progress on this front, we consider in this paper the linear deterministic cellular channel. In particular, we prove duality of a network with two interfering MACs in each cell and a network with two interfering BCs in each cell. The operational region is confined to the weak interference regime. First, achievable schemes as well as upper bounds will be provided. These bounds are the same for both channels. We will show, that for specific cases the upper bound corresponds to the achievable scheme and hence establishing a duality relationship between them.
\end{abstract}

\section{Introduction}

Information-theoretic problems for BC models are in general much harder to solve, than for the MAC models. This fact 
motivates a research which investigates possible relations between the two channel models. 
These relations are usually referred to as \textit{duality} or \textit{reciprocity}. Duality could mean that the capacity region, the achievable scheme or the upper bound of an BC model
is the same as in the dual-MAC set-up. 
%An capacity relationship contains connections between the achievable schemes as well as upper bounds. 
%Such a connection between dual channel models can be extremely useful. 
This could help to calculate hard BC problems, by transferring them to the rather easy dual-MAC model. And translating
the solution back to the BC model. A dual model is defined in the usual way, as a network with the same nodes and link gains as in the original channel model, but reversed directions of the transmissions. This means that the nodes interchange their original purpose of transmitter and receiver.
In recent years, a lot of progress has been made towards the understanding of duality.
Among the most notable results is \cite{Jindal2004}. Here it has been shown, that the capacity region for the Gaussian BC is the union over all
Gaussian MAC capacity regions of individual power constraints which add up to the BC power constraint. Likewise the capacity region of the Gaussian MAC is equal to the intersection of the dual Gaussian BC capacity regions. 
So the only difference is the distribution of power constraints. That is jointly in the BC case or distributed over all paths in the MAC case.
Another result for duality was established in \cite{Bresler2010}, in which the model of investigation is the Many-to-One and One-to-Many Gaussian interference channel.
It was found that under the approximation of the Linear Deterministic Model (LDM), first introduced in \cite{Avestimehr2007}, the capacity regions are identical. Therefore showing the duality of the two channel models. The LDM approximation is a powerful technique which can be used to investigate certain models under approximate conditions. These conditions are e.g. truncation at noise level, which simplifies calculation of the capacity region in contrast to the Gaussian case. However, these approximate results can be used to shine light on possible solutions of the Gaussian case. For example \cite{Bresler2008} showed, that the capacity region of the linear deterministic interference channel is within a constant 42 bit gap of the corresponding Gaussian interference channel. 

 \begin{figure}
   \centering
   \includegraphics[width=0.43 \textwidth]{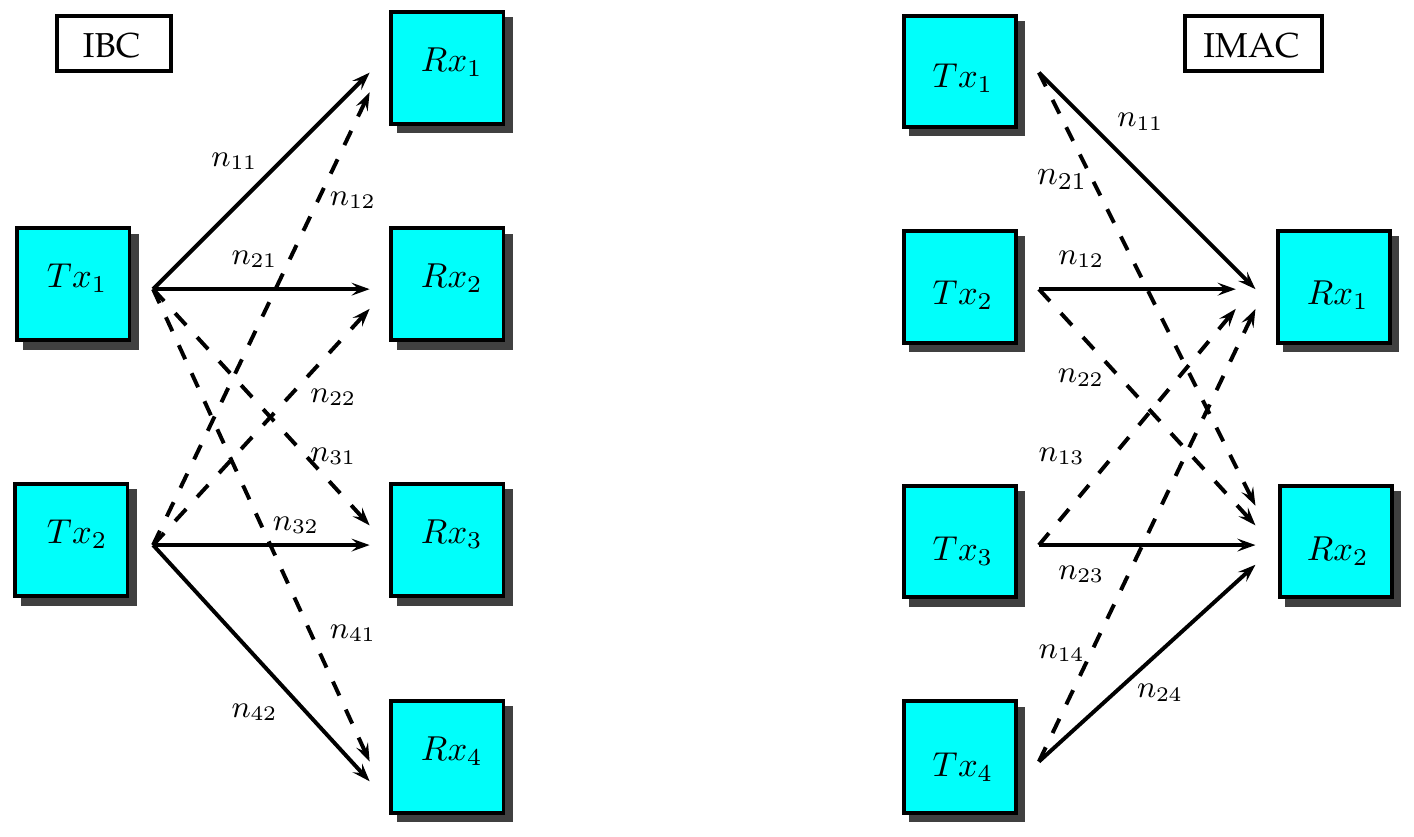}
   \caption{IBC and IMAC Model: Solid lines represent the direct signals and dotted lines the interference.}
   \label{system model}
 \end{figure}
 
{\bf Contributions}
In this paper we investigate two models of cellular networks, namely the interfering-MAC system, which will be henceforth referred to as the IMAC following the naming of \cite{Suh2008} and the dual model, the interfering-BC (IBC) system. The IMAC is a cellular system which consists of two cells with a MAC in each cell, interfering among themselves. The same for the IBC system, which are two broadcast channels separated in cells and interfering with each other. \cite{Suh2008} investigated the Gaussian IMAC under multipath and delay conditions and derived an DoF result, which states that the interference-free DoF can be reached as the number of transmitters increases in each cell. A duality result for the DoF between the IMAC and the IBC is also derived. But no capacity result was shown. We take another approach and investigate the approximative LDM IMAC and IBC model.
As to the best of our knowledge, no other capacity duality result for a multi-cell LDM system is known. We will therefore derive the sum capacity for the IBC and the IMAC for the very weak interference regime and show that under specific conditions a duality relationship exists. The investigated channel models include the BC-P2P (which consists of a BC interfering with a Point-to-Point Channel) and the previously in \cite{Buhler2011}  investigated MAC-P2P as special cases and therefore show a duality result for these models as well. We will also expand the proof techniques of the IMAC to the k-transmitter IMAC. Here we show that with increasing number of transmitters the capacity approaches the interference-free case.

{\bf Basic Notation}
For now on all vectors and matrices are written bold. Random variables are written upper case and values of these variables lower case. The elements of the vectors and matrices are elements of the finite binary field $\mathbb{F}_2$. 
 
To specify a particular range of elements in a bit level vector we use the notation $\mathbf{A}_{[i:j]}$ to indicate that $\mathbf{A}$ is restricted to the bit levels $i$ to $j$. If $i=1$, it will be omitted $\mathbf{A}_{[:j]}$, the same for $j\!=\!n$ $\mathbf{A}_{[i:]}$. Therefore $\mathbf{A}=\mathbf{A}_{[:]}$ which would correspond to no restriction at all.

\section{System Model} 
\label{sec:Setup}

\subsection{IMAC Model}

The system consists of 4 transmitters and 2 receivers. Transmitters $Tx_1$ and $Tx_2$ together with the receiver $Rx_1$ and $Tx_3$, $Tx_4$ with $Rx_2$ each form a MAC and both are interfering with each other (see fig. \ref{system model}).

A $(2^{nR_1}, 2^{nR_2}, 2^{nR_3}, 2^{nR_4},n)$ code will consist of four encoding and two decoding functions. The encoder $i$ assigns a codeword $x_i^n(m_i)$ to each message $m_i\in[1\!:\!2^{nR_i}]$ and the associated decoder $k$ assigns an estimate $(\hat{m}_k,\hat{m}_{k+1})\in [1\!:\!2^{nR_k}]\times [1\!:\!2^{nR_{k+1}}]$ for $k\in \{1,3\}$.
The probability of error will be defined as $P_e=P((\hat{m}_{k},\hat{m}_{k+1})\!\neq\! (m_k,m_{k+1}))$. We assume that the message pairs are uniformly distributed over $[1\!:\!2^{nR_k}]\times [1\!:\!2^{nR_{k+1}}]$ and independent of each other. A rate pair is said to be achievable if there exists a sequence of $(2^{nR_1},2^{nR_2},2^{nR_3},2^{nR_4},n)$ codes for which $\lim_{n\rightarrow\infty} P_e=0$.

As additional modification to simplify the system model the Linear Deterministic Model (LDM) is used.
 The LDM models the input symbols at $Tx_i$ as bit vectors $\mathbf{x}_i$. This is achieved by a binary expansion of the real input signal. The resulting bits constitute the new bit vector. The positions within the vector will be referred to as levels. To model the signal impairment induced by noise, the bit vectors will be truncated at noise level and only the n most significant bits are received at $Rx_i$. This is done by shifting the incoming bit vector for $q-n$ positions $\mathbf{Y}=\mathbf{S}^{q-n}\mathbf{X}$.
Where $\mathbf{S}$ is the shift matrix defined as 
\begin{equation}
\mathbf{S}=\begin{pmatrix}
0 & 0 &  \cdots & 0 & 0\\
1 & 0 &  \cdots & 0 & 0\\
0 & 1 &  \cdots & 0 & 0\\
\vdots & \vdots & \ddots & \vdots & \vdots \\
0 & 0 &  \cdots & 1 & 0\\
\end{pmatrix}.
\end{equation}
Superposition at the receivers is modelled via binary addition of the incoming bit vectors on the individual levels. Carry over is not used to limit the superposition on the specific level where it occurs. The channel gain is represented by $n_{ij}$-bit levels which corresponds to $\lceil\log \mbox{SNR}\rceil$ of the original channel. 
With this definitions the model can be written as
\begin{IEEEeqnarray*}{rCl}
\mathbf{Y}_1 & = & \mathbf{S}^{q-n_{11}}\mathbf{X}_1\oplus \mathbf{S}^{q-n_{12}}\mathbf{X}_2\oplus \mathbf{S}^{q-n_{13}}\mathbf{X}_3\oplus \mathbf{S}^{q-n_{14}}\mathbf{X}_4\\
\mathbf{Y}_2 & = & \mathbf{S}^{q-n_{21}}\mathbf{X}_1\oplus \mathbf{S}^{q-n_{22}}\mathbf{X}_2\oplus \mathbf{S}^{q-n_{23}}\mathbf{X}_3\oplus \mathbf{S}^{q-n_{24}}\mathbf{X}_4 \IEEEyesnumber\label{MAC-MAC}
\end{IEEEeqnarray*}

with $n_{ij}$ as in fig. \ref{system model} shown.
The direct signals are for simplicity written as $n_{11}=n_1$, $n_{12}=n_2$, and $n_{23}=n_3$, $n_{24}=n_4$. It is assumed that $n_1\geq n_2$,  $n_3\geq n_4$ and the difference between the two signals is denoted as $n_1-n_2=:\Delta_1$ and $n_3-n_4=:\Delta_2$. Furthermore it is assumed that $n_{21}=n_{22}=:n_M$ and $n_{13}=n_{14}=:n_D$, stating that the interference caused by $\mathbf{X}_{i,j}$ at the receivers is the same. Note that this restriction is justified in the case, when the distance between the two cells is much bigger than the cell dimensions itself.

\subsection{IBC Model}
The IBC system consists of 2 transmitters and 4 receivers. Transmitter $Tx_1$ together with the receivers $Rx_1$ and $Rx_2$, $Tx_2$ with $Rx_3$ and $Rx_4$ form a BC and both are interfering with each other (see fig. \ref{system model}). A $(2^{nR_{12}}, 2^{nR_{34}}, 2^{nR_1},2^{nR_2},2^{nR_3},2^{nR_4},n)$ code will consist of two encoding and four decoding functions. Encoders 1 and 2 assign a codeword $x_1^n(m_{12},m_1,m_2)$ and $x_2^n(m_{34},m_3,m_4)$ respectively, to each message triple
 $(m_{12},m_1,m_2)\in[1:2^{nR_{12}}]\times[1:2^{nR_1}]\times [1:2^{nR_2}]$ and $(m_{34},m_3,m_4)\in[1:2^{nR_{34}}]\times[1:2^{nR_3}]\times [1:2^{nR_4}]$.
The associated decoder $i$ assigns an estimate $(\hat{m_{0i}},\hat{m_i})\in [1:2^{nR_{0i}}]\times[1:2^{nR_i}]$ to each received sequence $y_i$.
The probability of error will be defined as \begin{equation*} 
 P_e=P((\hat{m}_{0i},\hat{m}_i)\neq (m_{k},m_i))
\end{equation*} with $k \in \{12,34\}$ and $i$ for the corresponding cell and decoder, respectively. We assume that the message triples are uniformly distributed over the corresponding message sets and independent of each other. A rate pair is said to be achievable if there exists a sequence of $(2^{nR_{12}}, 2^{nR_{34}},2^{nR_1},2^{nR_2},2^{nR_3},2^{nR_4},n)$ codes for which $\lim_{n\rightarrow\infty} P_e=0$.

Just as the IMAC case, in the IBC system model we will use the LDM model to simplify the problem. Therefore the channel model can be written as:

\begin{IEEEeqnarray*}{rCl}
\mathbf{Y}_1 & = & \mathbf{S}^{q-n_{11}}\mathbf{X}_1\oplus \mathbf{S}^{q-n_{12}}\mathbf{X}_2\\
\mathbf{Y}_2 & = & \mathbf{S}^{q-n_{21}}\mathbf{X}_1\oplus \mathbf{S}^{q-n_{22}}\mathbf{X}_2\IEEEyesnumber\label{BC-BC}\\
\mathbf{Y}_3 & = & \mathbf{S}^{q-n_{32}}\mathbf{X}_2\oplus \mathbf{S}^{q-n_{31}}\mathbf{X}_1\\
\mathbf{Y}_4 & = & \mathbf{S}^{q-n_{42}}\mathbf{X}_2\oplus \mathbf{S}^{q-n_{41}}\mathbf{X}_1
\end{IEEEeqnarray*}

with $n_{ij}$ as in fig. \ref{system model} shown.
The direct signals are for simplicity written as $n_{11}=n_1$, $n_{21}=n_2$, and $n_{32}=n_3$, $n_{42}=n_4$. It is assumed that $n_1\geq n_2$,  $n_3\geq n_4$ and the difference between the two signals is denoted as $n_1-n_2=:\Delta_1$ and $n_3-n_4=:\Delta_2$. Furthermore it is assumed that $n_{31}=n_{41}=:n_M$ and $n_{12}=n_{22}=:n_D$, stating that the interference caused by $x_{i,j}$ at the receivers is the same. As in the IMAC case this restriction is justified when the distance between the cells is much bigger than the cell dimensions itself.

\section{Coding Schemes for the Very Weak Interference Case}
For better definitions of the particular ranges some definitions will be introduced. 
For the very weak interference case it is assumed that the sum of both interference parts of the signals are below the direct signal level. This is stated in the condition that $n_M+n_D \leq \min (n_1,n_3)$. For a symmetric model, this condition becomes: $0 \leq \alpha \leq \frac{1}{2}$, with $\alpha := \frac{n_i}{n_1}$.\\

{\bf IMAC System}

The achievability scheme for the IMAC is basically an extended version of the scheme already used for the MAC-P2P in \cite{Buhler2011}. 
Like in the MAC-P2P we split the system (\ref{MAC-MAC}) into two sub systems, $\mathcal{R}_{ach}^{(1)}$ and $\mathcal{R}_{ach}^{(2)}$. The sum of the achievable rates of these two sub systems will constitute the overall sum-capacity. The sub systems are equally structured and are given by the equations

\begin{IEEEeqnarray*}{rCl}
\mathbf{Y}_1^{(1)} & = & \mathbf{S}^{q^{(1)}-(n_1-n_D)}\mathbf{X}_1^{(1)}\oplus \mathbf{S}^{q^{(1)}-(n_2-n_D)}\mathbf{X}_2^{(1)}\\
\mathbf{Y}_2^{(1)} & = & \mathbf{S}^{q^{(1)}-n_M}\mathbf{X}_1^{(1)}\oplus \mathbf{S}^{q^{(1)}-n_M}\mathbf{X}_2^{(1)}\oplus \mathbf{S}^{q^{(1)}-n_M}\mathbf{X}_3^{(1)}\\
&&\oplus\: \mathbf{S}^{q^{(1)}-n_M}\mathbf{X}_4^{(1)} \IEEEyesnumber
\end{IEEEeqnarray*}

for $\mathcal{R}_{ach}^{(1)}$ and 

\begin{IEEEeqnarray*}{rCl}
\mathbf{Y}_1^{(2)} & = & \mathbf{S}^{q^{(2)}-(n_3-n_M)}\mathbf{X}_3^{(2)}\oplus \mathbf{S}^{q^{(2)}-(n_4-n_M)}\mathbf{X}_4^{(2)}\\
\mathbf{Y}_2^{(2)} & = & \mathbf{S}^{q^{(2)}-n_D}\mathbf{X}_1^{(2)}\oplus \mathbf{S}^{q^{(2)}-n_D}\mathbf{X}_2^{(2)}\oplus \mathbf{S}^{q^{(2)}-n_D}\mathbf{X}_3^{(2)}\\
&&\oplus\: \mathbf{S}^{q^{(2)}-n_D}\mathbf{X}_4^{(2)} \IEEEyesnumber
\end{IEEEeqnarray*}

for $\mathcal{R}_{ach}^{(2)}$. 

The achievable sum rates for the systems are defined as 

\begin{equation}
R_\Sigma^{(1)} \leq n_M + \zeta^{(1)} +\phi(n_M,\Delta_1)
\end{equation}
\begin{equation}
R_\Sigma^{(2)} \leq n_D + \zeta^{(2)}+\phi (n_D,\Delta_2).
\end{equation}
Where $\zeta^{(1)}:=n_2-n_M-n_D$, $\zeta^{(2)}:=n_4-n_M-n_D$ and the function $\phi$ for $p,q \in \mathbb{N}_0$, following the notation of \cite{Buhler2011}, defined as
\begin{equation}
\phi(p,q):=
\begin{cases}
q+\frac{l(p,q)q}{2} & \text{if } l(p,q)\text{ is even,}
\\
p-\frac{(l(p,q)-1)q}{2} & \text{if } l(p,q)\text{ is odd}.
\end{cases}
\label{phi}
\end{equation}
where $l(p,q):=\lfloor\frac{p}{q}\rfloor\ \mbox{for}\ q>0\ \mbox{and}\ l(p,0)=0$.
Considering the sub systems where $\mathbf{X}_3^{(1)}$ or $\mathbf{X}_4^{(1)}$ in $\mathcal{R}_{ach}^{(1)}$ and $\mathbf{X}_1^{(2)}$ or $\mathbf{X}_2^{(2)}$ in $\mathcal{R}_{ach}^{(2)}$ are removed one can see that the sum rates are achievable with interference alignment and optimal bit level assignment which is proven in   
 \cite{Buhler2011}. Therefore it is clear that the given sum rates are achievable for the subsystems as well. Finally, the sum rate for the overall system can be obtained by adding the sub systems: $R_\Sigma^{(1)}+R_\Sigma^{(2)} = R_\Sigma$
 \begin{IEEEeqnarray*}{rCl}
R_\Sigma &\leq & n_M + \zeta^{(1)} + n_D + \zeta^{(2)}+\phi(n_M,\Delta_1) +\phi (n_D,\Delta_2)\\
& = & n_2+n_4-n_M-n_D + \phi(n_M,\Delta_1) +\phi (n_D,\Delta_2)\IEEEyesnumber\label{IMAC-ach}
\end{IEEEeqnarray*}
\begin{figure}
\centering
\includegraphics[scale=0.65]{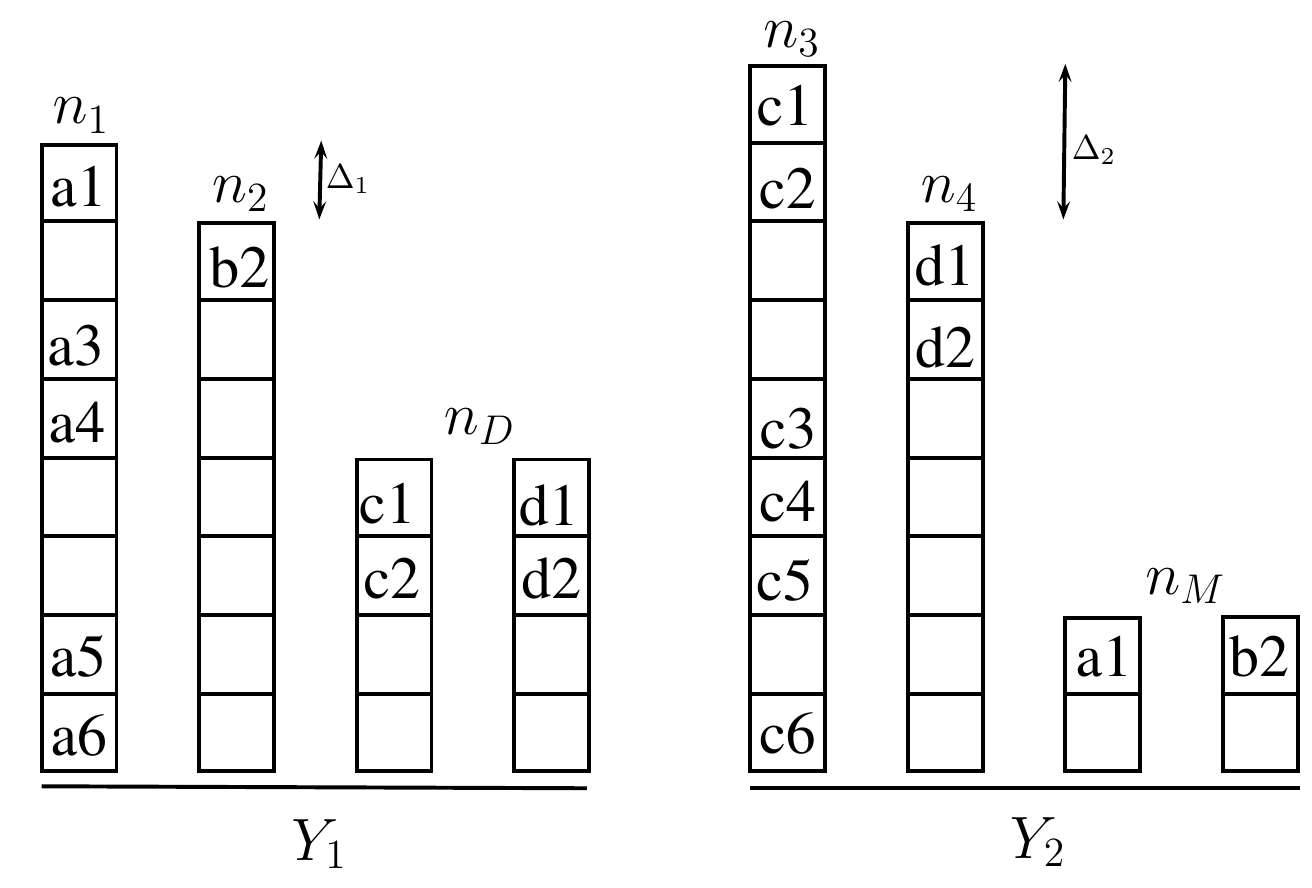}
\caption{An example for a scheme which achieves the upper bound is presented in the figure. The MAC-cell has $n_1=8$ bit levels and $n_2=7$ bit levels and generates interference, at the other side, of $n_M=2$ bit levels. Whereas the other MAC cell has $n_3=9 $ and $n_4= 7$ bit levels and generates $n_D=4$ bit levels interference. The scheme above yields a sum rate of 14 bit levels and therefore reaches the upper bound which can be calculated with $R_{\Sigma}\leq n_1+n_3-\frac{n_M}{2}-\frac{n_D}{2}$.}
\label{MAC_MAC_SCHEMA}
\end{figure}

{\bf IBC System}

Since the sum rate (\ref{IMAC-ach}) is achievable in the IMAC, it follows that the same sum rate can be achieved in the IBC. This is because the achievable scheme in the IMAC is linear and \cite{Raja2009} has proven that every linear coding scheme also proofs achievability for the dual case. A strategy based on the IMAC case would be to merge the dual MAC signals into one BC signal. So for a specific code vector we have  $\mathbf{x}_1^{BC}=\mathbf{x}_1^{MAC}\oplus\mathbf{x}_2^{MAC}$ and turn the resulting BC-bit vector upside down. The dual-case example for the one in the IMAC case is shown in figure \ref{BC-BC_Schema}.
\begin{figure}
\centering
\includegraphics[scale=0.61]{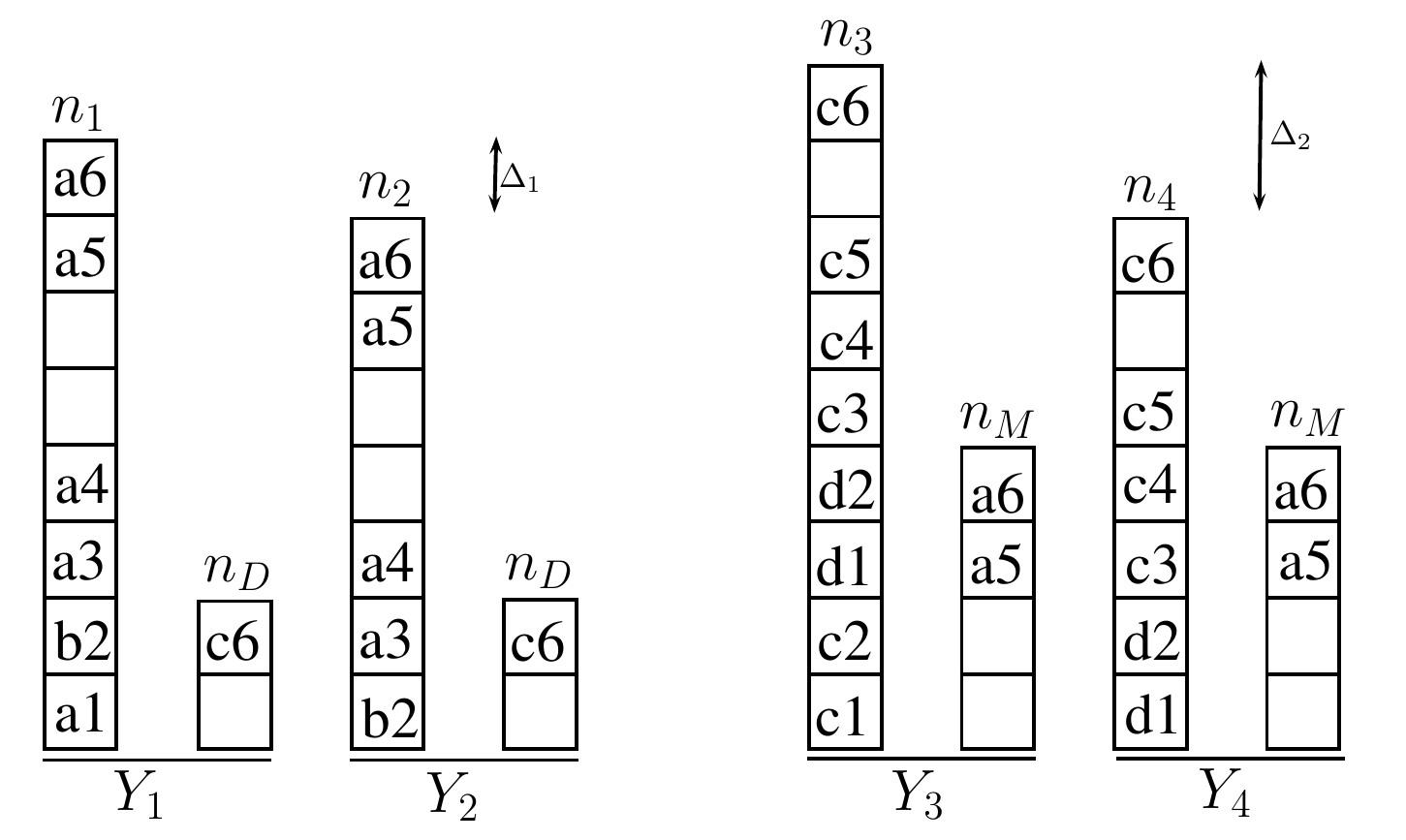}
\caption{An example for a coding scheme of the IBC system model is shown. The example is chosen such that it depicts the exact dual case to the IMAC model as in the previous example. Here one can see the basic strategy to get a coding scheme for the IBC case out of the IMAC case. The MAC components got merged and the coding vectors inverted as described in the section on the achievability scheme. Like in the IMAC model, the sum rate is 14 bit levels. }
\label{BC-BC_Schema}
\end{figure}

\section{Upper Bounds on Sum Rate}

To show the upper bounds, we need to divide the very weak interference case in two sub cases. 
The first one can be defined by $n_M+n_D \leq n_2$ and $n_M+n_D \leq n_4$ which basically prevents that the common part in the direct signals overlap with the interference of the other cell.
The second case is $n_2 \leq n_M+n_D \leq n_1$ and $n_4 \leq n_M+n_D \leq n_3$ which only prevents overlapping between the common part of the strongest direct signal and the interference. This condition is the same as limiting the overlapping part of the smaller direct signals with the interference by $\Delta$. For clarity of the exposition we will consider only the cases, when both cells fall in the same interference regime. However, the same proof techniques can be used to show the mixed cases as well.\\
Some additional notations for specific bit vector parts will be used. In the context of the IMAC, a bit vector $\mathbf{A}$ will be split up in a part $\mathbf{A}^{\uparrow}$ which is not affected by interference and a part $\mathbf{A}^{\downarrow}$. For example the bit vector $\mathbf{Y}_1$ can be split in $\mathbf{Y}_1^{\uparrow}=\mathbf{X}_{1,[:n_M+\Delta_1]} \oplus \mathbf{X}_{2,[:n_M]} $ and $\mathbf{Y}_1^{\downarrow}$ is the other part of $\mathbf{Y}_1$. 

For the IBC proofs, the split ranges are flipped, which means for example for the bit vector $\mathbf{Y}_1^{\uparrow}=\mathbf{X}_{1,[:n_1-(n_D+\Delta_1)]}$. For comparison, in the case of $n_M+n_D \leq n_2$ and $n_M+n_D \leq n_4$, the IMAC bit vector with the up-arrow $\mathbf{Y}_1^{\uparrow}$ has $n_M+\Delta_1$ bit-levels and the down-arrow vector $\mathbf{Y}_1^{\downarrow}$ has $n_2-n_M$. Whereas in the IBC system the up-arrow vector $\mathbf{Y}_1^{\uparrow}$ has $n_2-n_D$ bit-levels and the down-arrow vector $\mathbf{Y}_1^{\downarrow}$ has $n_D+\Delta_1$ bit-levels.
The common part of a bit vector $\mathbf{A}$ is denoted as $\hat{\mathbf{A}}$, for example: $\hat{\mathbf{X}}_1=\mathbf{X}_{1,[:n_M]}$. \\

{\bf IMAC: Upper Bound}

\begin{mytheorem}[]
The sum rate for the IMAC system model can be upper bounded by 
\begin{equation}
R_\Sigma \leq n_1+n_3-\frac{n_M}{2}-\frac{n_D}{2}.
\end{equation}
\end{mytheorem}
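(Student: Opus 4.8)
The plan is to combine Fano's inequality with a genie-aided argument that exploits the fact that each receiver in a cell must decode \emph{both} of its users. Writing $R_\Sigma=(R_1+R_2)+(R_3+R_4)$ and applying Fano at each multiple-access receiver, I would first obtain
\begin{equation}
n(R_1+R_2)\le I(\mathbf{X}_1,\mathbf{X}_2;\mathbf{Y}_1)+n\epsilon_n,\qquad n(R_3+R_4)\le I(\mathbf{X}_3,\mathbf{X}_4;\mathbf{Y}_2)+n\epsilon_n,
\end{equation}
and rewrite each right-hand side as $H(\mathbf{Y}_i)-H(\mathbf{V}_i)$, where $\mathbf{V}_1=\mathbf{S}^{q-n_D}(\mathbf{X}_3\oplus\mathbf{X}_4)$ and $\mathbf{V}_2=\mathbf{S}^{q-n_M}(\mathbf{X}_1\oplus\mathbf{X}_2)$ are the interference terms. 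The decisive structural feature is that, because the two cross gains inside each cell are equal ($n_{13}=n_{14}=n_D$ and $n_{21}=n_{22}=n_M$), each interference is a deterministic function of a \emph{single} combined vector $\mathbf{X}_3\oplus\mathbf{X}_4$ (respectively $\mathbf{X}_1\oplus\mathbf{X}_2$); it is this collapsing of two interferers into one that ultimately permits charging only half of $n_M$ and half of $n_D$ rather than their full values.

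Next I would bring in the clean/interfered split $\mathbf{Y}_i=(\mathbf{Y}_i^{\uparrow},\mathbf{Y}_i^{\downarrow})$ and the common-part notation $\hat{\mathbf{X}}_i$. Using subadditivity, $H(\mathbf{Y}_1)\le H(\mathbf{Y}_1^{\uparrow})+H(\mathbf{Y}_1^{\downarrow})$, and bounding each interference-free level by its count, I would reduce $H(\mathbf{Y}_1)$ to $n\,n_1$ minus a term that can be set against $H(\mathbf{V}_2)$, since $\mathbf{V}_2=\hat{\mathbf{X}}_1\oplus\hat{\mathbf{X}}_2$ is carried, in the very-weak regime $n_M+n_D\le n_2$, inside the clean top part $\mathbf{Y}_1^{\uparrow}$; symmetrically for $\mathbf{Y}_2$ and $H(\mathbf{V}_1)$ under $n_M+n_D\le n_4$. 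The aim of this bookkeeping is to express both $H(\mathbf{Y}_1)$ and $H(\mathbf{Y}_2)$ purely in terms of the interference-free level counts and the two interference entropies $H(\mathbf{V}_1),H(\mathbf{V}_2)$.

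The factor $\tfrac12$ I would obtain by assembling, for each cell, a complementary pair of bounds: one that charges the full interference as a penalty through the term $-H(\mathbf{V}_i)$, and one that instead credits the same $H(\mathbf{V}_i)$ as useful signal delivered to the far receiver (the interference suffered by one cell is precisely the combined message the other cell must convey). Adding all four inequalities makes $H(\mathbf{V}_1)$ and $H(\mathbf{V}_2)$ cancel, so that each cross term is effectively counted once, yielding a bound of the form $2nR_\Sigma\le 2n(n_1+n_3)-n(n_M+n_D)+O(n\epsilon_n)$; dividing by $2n$ and letting $n\to\infty$ gives the claim. By the cell-swap symmetry $n_1\leftrightarrow n_3$, $n_M\leftrightarrow n_D$ the treatment of the $n_M$-term mirrors that of the $n_D$-term, so only one side need be carried out explicitly.

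The step I expect to be the main obstacle is the alignment mismatch between interference and desired signals. The combined interference $\mathbf{X}_3\oplus\mathbf{X}_4$ reaches $\mathbf{Y}_1$ aligned at the single gain $n_D$, whereas $\mathbf{X}_3$ and $\mathbf{X}_4$ enter their own receiver $\mathbf{Y}_2$ at the two different shifts $n_3$ and $n_4$; consequently $\mathbf{V}_1$ is \emph{not} simply a sub-vector of $\mathbf{Y}_2$, and the cross-entropy cancellation cannot be read off directly but must be justified bit level by bit level. Controlling this is exactly where the very-weak-interference conditions $n_M+n_D\le n_2$ and $n_M+n_D\le n_4$ (and, in the second regime, the analogous conditions relaxed by $\Delta$) are needed, guaranteeing that the relevant common bits $\hat{\mathbf{X}}_i$ lie in the interference-free region; it is also the reason the argument separates into the two subcases identified before the statement.
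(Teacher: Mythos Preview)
Your plan is essentially the paper's proof: Fano at each MAC, expand to $H(\mathbf{Y}_i)-H(\mathbf{V}_i)$, double the inequality, split $\mathbf{Y}_i=(\mathbf{Y}_i^{\uparrow},\mathbf{Y}_i^{\downarrow})$, and cross-cancel the interference entropy from one cell against the clean top of the other; the paper regroups so that $-2H(\hat{\mathbf{X}}_1\oplus\hat{\mathbf{X}}_2)$ (which is $-2H(\mathbf{V}_2)$, coming from $\mathbf{Y}_2$) sits next to $2H(\mathbf{Y}_1^{\uparrow})$, exactly as you describe.

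The one point where your proposal stops short is the very obstacle you flag. You correctly observe that $\mathbf{V}_2=\hat{\mathbf{X}}_1\oplus\hat{\mathbf{X}}_2$ is aligned while inside $\mathbf{Y}_1^{\uparrow}$ the same bits of $\mathbf{X}_1,\mathbf{X}_2$ are offset by $\Delta_1$, so $\mathbf{V}_2$ is not a sub-vector of $\mathbf{Y}_1^{\uparrow}$ and the cancellation is not immediate. The paper resolves this with a single entropy inequality that your outline is missing: by independence of $\mathbf{X}_1,\mathbf{X}_2$,
\[
2H(\hat{\mathbf{X}}_1\oplus\hat{\mathbf{X}}_2)\ \ge\ H(\hat{\mathbf{X}}_1\oplus\hat{\mathbf{X}}_2\mid\hat{\mathbf{X}}_1)+H(\hat{\mathbf{X}}_1\oplus\hat{\mathbf{X}}_2\mid\hat{\mathbf{X}}_2)=H(\hat{\mathbf{X}}_2)+H(\hat{\mathbf{X}}_1),
\]
while $\mathbf{Y}_1^{\uparrow}$ is a function of $(\mathbf{X}_{1,[:n_M+\Delta_1]},\hat{\mathbf{X}}_2)$, giving $H(\mathbf{Y}_1^{\uparrow})\le H(\hat{\mathbf{X}}_1)+H(\hat{\mathbf{X}}_2)+n\Delta_1$. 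Hence $H(\mathbf{Y}_1^{\uparrow})-2H(\hat{\mathbf{X}}_1\oplus\hat{\mathbf{X}}_2)\le n\Delta_1$. With this in hand (and the trivial $H(\mathbf{Y}_1^{\uparrow})\le n(n_M+\Delta_1)$ for the remaining copy), your bookkeeping goes through and yields the stated bound after dividing by $2n$.
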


\begin{proof}
Considering Fano's inequality and the Data Processing inequality one can establish the following bounds:
\begin{IEEEeqnarray*}{rCl}
\IEEEeqnarraymulticol{3}{l}{
n(R_1+R_2+R_3+R_4)}\\
& \leq & I(\mathbf{X}_1^n,\mathbf{X}_2^n;\mathbf{Y}_1^n)+I(\mathbf{X}_3^n,\mathbf{X}_4^n; \mathbf{Y}_2^n) +n(\epsilon_{n,12}+\epsilon_{n,34})\\
& = & H(\mathbf{Y}_1^n)-H(\mathbf{Y}_1^n| \mathbf{X}_1^n,\mathbf{X}_2^n)+H(\mathbf{Y}_2^n)\\
&&-\: H(\mathbf{Y}_2^n|\mathbf{X}_3^n,\mathbf{X}_4^n)+n(\epsilon_{n,12}+\epsilon_{n,34}).
\end{IEEEeqnarray*}
For the case of $n_M+n_D \leq n_2$ and $n_M+n_D \leq n_4$ we begin by 
\begin{IEEEeqnarray*}{rCl}
\IEEEeqnarraymulticol{3}{l}{
2 n(R_\Sigma - \epsilon_{n,\Sigma})}\\
 &\leq& 2 H(\mathbf{Y}_1^n)-2 H (\mathbf{Y}_1^n| \mathbf{X}_1^n,\mathbf{X}_2^n) +2 H(\mathbf{Y}_2^n)\\
 &&-\: 2H(\mathbf{Y}_2^n|\mathbf{X}_3^n,\mathbf{X}_4^n)\\
& = & 2 H(\mathbf{Y}_1^{n,\downarrow})+2 H(\mathbf{Y}_1^{n,\uparrow})-2H(\mathbf{\hat{X}}_1^n \oplus \mathbf{\hat{X}}_2^n)\\
&&+\: 2 H(\mathbf{Y}_2^{n,\downarrow})+2 H(\mathbf{Y}_2^{n,\uparrow})-2H(\mathbf{\hat{X}}_3^n \oplus \mathbf{\hat{X}}_4^n)\\ 
& \overset{(a)}{\leq} & 2n(n_2-n_M) + H(\mathbf{Y}_1^{n,\uparrow})+ n\Delta_1\\
&& +\: 2n(n_3-n_D) + H(\mathbf{Y}_2^{n,\uparrow})+n\Delta_2\\ 
& \leq & 2n(n_2-n_M)+2n \Delta_1 + nn_M\\
&& +\: 2n(n_4-n_D) + 2n\Delta_2 + nn_D
\end{IEEEeqnarray*}
where we used that $H(\mathbf{Y}_{1}^{n,\uparrow})\leq n(n_{m}+\Delta_{1})$ and $H(\mathbf{Y}_{3}^{n,\uparrow}) \leq n(n_{D}+\Delta_{2})$. Also (a) can be shown by considering
\begin{IEEEeqnarray*}{rCl}
\IEEEeqnarraymulticol{3}{l}{
H(\mathbf{Y}_{1}^{n,\uparrow})-2H(\mathbf{\hat{X}}_{1}^n \oplus \mathbf{\hat{X}}_{2}^n)}\\
& \leq & H(\mathbf{Y}_{1}^{n,\uparrow}) -H(\mathbf{\hat{X}}_{1}^n \oplus \mathbf{\hat{X}}_{2}^n|\mathbf{\hat{X}}_{1}^n)-H(\mathbf{\hat{X}}_{1}^n \oplus \mathbf{\hat{X}}_{2}^n|\mathbf{\hat{X}}_{2}^n) \\
& \leq &  n(\Delta_{1})
\end{IEEEeqnarray*}
 which can be also shown for the other MAC cell using the independence between the direct bit vectors.
Dividing both sides by two and taking $n\rightarrow \infty$ yields the desired upper bound.
The second case ($n_2 \leq n_M+n_D \leq n_1$ and $n_4 \leq n_M+n_D \leq n_3$) can be shown similarly.\\
\end{proof}

{\bf IBC: Upper Bound}

\begin{mytheorem}[]
The sum rate for the IBC system model can be upper bounded by 
\begin{equation}
R_\Sigma \leq n_1+n_3-\frac{n_M}{2}-\frac{n_D}{2}.
\end{equation}
\end{mytheorem}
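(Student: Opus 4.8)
The plan is to mirror the converse I would write for the IMAC, transporting every step through the link reversal and through the flipped bit-level splits announced just before the theorem. The whole proof splits, as in Theorem~1, into the two interference regimes, and I would carry out $n_M+n_D\le n_2,\ n_M+n_D\le n_4$ in detail and indicate the second regime $n_2\le n_M+n_D\le n_1$ at the end, where only the sizes of the $\uparrow/\downarrow$ blocks change.

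First I would single-letterize. For each broadcast cell I apply Fano's inequality at the stronger receiver together with the data-processing inequality, passing from the decoded messages to the channel input; the interference conditional entropies then collapse, since given $\mathbf{X}_1$ the receiver $\mathbf{Y}_1$ sees only the $n_D$ interfering levels of $Tx_2$, i.e. $H(\mathbf{Y}_1\mid \mathbf{X}_1)=H(\hat{\mathbf{X}}_2)$, and symmetrically $H(\mathbf{Y}_3\mid \mathbf{X}_2)=H(\hat{\mathbf{X}}_1)$. The target of this step is the exact dual of the IMAC starting bound, $nR_\Sigma\le H(\mathbf{Y}_1)+H(\mathbf{Y}_3)-H(\hat{\mathbf{X}}_2)-H(\hat{\mathbf{X}}_1)+n\epsilon$.

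Second I would run the two-copy argument. Doubling this inequality and splitting each strong-receiver output with the flipped ranges, $\mathbf{Y}_1^{\uparrow}=\mathbf{X}_{1,[:n_2-n_D]}$ (the clean top levels) and $\mathbf{Y}_1^{\downarrow}$ the remaining $n_D+\Delta_1$ levels carrying the interference $\hat{\mathbf{X}}_2$, and likewise for $\mathbf{Y}_3$, I reassign the subtracted interference entropies across the cells exactly as in the IMAC proof: $-2H(\hat{\mathbf{X}}_1)$ is paired with $\mathbf{Y}_1$ and $-2H(\hat{\mathbf{X}}_2)$ with $\mathbf{Y}_3$. The pairing is legitimate because $\hat{\mathbf{X}}_1=\mathbf{X}_{1,[:n_M]}$, the $n_M$ levels $Tx_1$ injects into cell~2, sits inside $\mathbf{Y}_1^{\uparrow}$ (here I use $n_M+n_D\le n_2$). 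The key step, the analogue of step (a), keeps one copy of $H(\mathbf{Y}_1^{\uparrow})\le n(n_2-n_D)$ and bounds the other copy against the doubled interference term, using that $\hat{\mathbf{X}}_1$ is a deterministic function of $\mathbf{Y}_1^{\uparrow}$, namely $H(\mathbf{Y}_1^{\uparrow})-2H(\hat{\mathbf{X}}_1)=H(\mathbf{Y}_1^{\uparrow}\mid\hat{\mathbf{X}}_1)-H(\hat{\mathbf{X}}_1)\le n(n_2-n_D-n_M)$, together with the trivial $H(\mathbf{Y}_1^{\downarrow})\le n(n_D+\Delta_1)$. Collecting the cell-1 terms gives $2n_1-n_M$ and, symmetrically, cell~2 gives $2n_3-n_D$; halving yields $R_\Sigma\le n_1+n_3-\frac{n_M}{2}-\frac{n_D}{2}$.

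The main obstacle is the single-letterization, not the level counting. Unlike the IMAC, whose receivers see a genuine multiple-access sum, the LDM broadcast cell is \emph{not} degraded once interference is present: the two receivers have different direct gains but share the \emph{same} interfering levels, so the weaker receiver resolves desired bits that are misaligned, relative to that common interference, with anything the stronger receiver can clean up, and hence the stronger receiver cannot simply decode the weaker user's message. The naive reduction $n(R_{12}+R_1+R_2)\le I(\mathbf{X}_1;\mathbf{Y}_1)$ therefore fails, and the honest route is to feed a carefully chosen genie (the interfering levels, or the cell-mate's residual observation) or to invoke the linear reciprocity of \cite{Raja2009}. The delicate quantitative point is that the interference entropy must survive \emph{exactly once} per cell: a sanity check shows that charging it twice would produce the stronger claim $R_\Sigma\le n_1+n_3-n_M-n_D$, which already drops below the achievable sum rate (\ref{IMAC-ach}) whenever $\Delta_1,\Delta_2>0$ and is thus false, so getting the factor-two bookkeeping right in the flipped-split regime is precisely the content of the argument.
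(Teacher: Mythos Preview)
You have correctly located the crux of the problem---the interfered LDM broadcast cell is not degraded, so the reduction $n(R_{12}+R_1+R_2)\le I(\mathbf{X}_1;\mathbf{Y}_1^n)$ is illegitimate---but you then leave it unresolved. Your two proposed escapes do not work as stated: the linear reciprocity of \cite{Raja2009} transfers \emph{achievable} linear schemes between dual networks and says nothing about converses; and ``feed a carefully chosen genie'' is a placeholder, not an argument. In particular, handing $\hat{\mathbf{X}}_2$ to receiver~1 yields $n(R_{12}+R_1+R_2)\le H(\mathbf{Y}_1^n\mid \hat{\mathbf{X}}_2^n)$, which is \emph{not} the expression $H(\mathbf{Y}_1^n)-H(\hat{\mathbf{X}}_2^n)$ your level-counting step consumes, and the discrepancy is exactly the mutual information you would need to control. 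So the proposal, as it stands, has a genuine gap at the single-letterization step; the subsequent two-copy bit-level arithmetic is fine but sits on an unproved foundation.

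The paper does not try to collapse each BC cell onto its strong receiver. Instead it uses the El~Gamal broadcast identity \cite{Gamal1979},
\[
n(R_{12}+R_1+R_2)\;\le\; I(M_1;\mathbf{Y}_1^n\mid M_{12})+I(M_{12},M_2;\mathbf{Y}_2^n)+n\epsilon_n,
\]
and then inserts the independent message $M_2$ into the first conditioning. This keeps \emph{both} receivers $\mathbf{Y}_1^n$ and $\mathbf{Y}_2^n$ in play, so no degradedness is needed. After doubling, the cell-1 contribution becomes
\[
2H(\mathbf{Y}_1^n\mid M_{12},M_2)-2H(\mathbf{Y}_2^n\mid M_{12},M_2)-2H(\hat{\mathbf{X}}_2^n)+2H(\mathbf{Y}_2^n),
\]
and the step analogous to your ``(a)'' bounds the first difference by $H(\mathbf{Y}_1^{n,\downarrow}\mid M_{12},M_2,\mathbf{X}_1^{n,\uparrow})+n\Delta_1$, using that $\mathbf{Y}_2^n$ contains $\mathbf{X}_1^{n,\uparrow}$ and that, conditioned on $(M_{12},M_2)$, the pair $\mathbf{X}_1^{n,\downarrow}$ and $\hat{\mathbf{X}}_2^n$ remain independent. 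The remaining trivial bounds $H(\mathbf{Y}_2^n)\le n n_2$ and $H(\mathbf{Y}_1^{n,\downarrow}\mid\cdot)\le n(n_D+\Delta_1)$ then yield $2n_1-n_M$ per cell exactly as you computed. In short, the level-counting you planned is right, but the entropy terms it acts on must come from the two-receiver El~Gamal decomposition, not from a one-receiver reduction.
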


\begin{proof}
One can show that (see also \cite{Gamal1979})
\begin{IEEEeqnarray*}{rCl}
\IEEEeqnarraymulticol{3}{l}{
n(R_{12}+R_1+R_2)}\\
& = & H(M_{12},M_1) + H(M_{12},M_2) - H(M_{12})\\
& = & I(M_{12},M_1;\mathbf{Y}_1^n) + I(M_{12},M_2;\mathbf{Y}_2^n) - I(M_{12};\mathbf{Y}_1^n)\\
&& +\: H(M_{12},M_1|\mathbf{Y}_1^n) + H(M_{12},M_2|\mathbf{Y}_2^n)\IEEEyesnumber\\
&& -\: H(M_{12}|\mathbf{Y}_1^n)\\
&\leq & I(M_{12},M_1;\mathbf{Y}_1^n) + I(M_{12},M_2;\mathbf{Y}_2^n) - I(M_{12};\mathbf{Y}_1^n)\\
&& +\: n(\epsilon_{1n} + \epsilon_{2n})\\
& = & I(M_1;\mathbf{Y}_1^n|M_{12}) + I(M_{12},M_2;\mathbf{Y}_2^n) + n(\epsilon_{1n}+\epsilon_{2n})
\end{IEEEeqnarray*}
using Fano's inequality, independence of $M_{12},M_1,M_2$, chain rule and the definition of mutual information. The same can be shown for $\mathbf{Y}_3^n$ and $\mathbf{Y}_4^n$.
Utilizing this relationship and combining the two results one obtains for the case $n_M+n_D \leq n_2$ and $n_M+n_D \leq n_4$:
\begin{IEEEeqnarray*}{rCl}
\IEEEeqnarraymulticol{3}{l}{
n(R_{12}+R_{34}+R_1+R_2+R_3+R_4-\epsilon_{n,\Sigma})}\\
 & \leq & I(M_1; \mathbf{Y}_1^n|M_{12})+I(M_{12},M_2; \mathbf{Y}_2^n)\\
 &&+\: I(M_3; \mathbf{Y}_3^n |M_{34})+I(M_{34},M_4;\mathbf{Y}_4^n)\\
&\overset{(a)}{\leq}& I(M_1; \mathbf{Y}_1^n|M_{12},M_2)+I(M_{12},M_2; \mathbf{Y}_2^n)\\
&&+\: I(M_3; \mathbf{Y}_3^n |M_0,M_4)+I(M_{34},M_4;\mathbf{Y}_4^n)\\
%& = & H(\mathbf{Y}_1^n |M_{12}, M_2)-H(\mathbf{Y}_1^n |M_{12},M_1,M_2)+ H(\mathbf{Y}_2^n)\\
%&&-\: H(\mathbf{Y}_2^n|M_{12},M_2)+ H(\mathbf{Y}_3^n |M_{34}, M_4)\\
%&&-\: H(\mathbf{Y}_3^n |M_{34},M_3,M_4)+H(\mathbf{Y}_4^n)-H(\mathbf{Y}_4^n | M_{34},M_4)\\
& = & H(\mathbf{Y}_1^n |M_{12}, M_2)-H(\mathbf{Y}_1^n |\mathbf{X}_1^n)+ H(\mathbf{Y}_2^n)\\
&&- \: H(\mathbf{Y}_2^n|M_{12},M_2)+ H(\mathbf{Y}_3^n |M_{34}, M_4)\\
&&- \: H(\mathbf{Y}_3^n |\mathbf{X}_2^n)+H(\mathbf{Y}_4^n)-H(\mathbf{Y}_4^n | M_{34},M_4)
\end{IEEEeqnarray*}
where we used the independence of $M_{12},M_1,M_2$ and $M_{34},M_3,M_4$ again in (a). Adding two of the inequalities yields
\begin{IEEEeqnarray*}{rCl}
\IEEEeqnarraymulticol{3}{l}{
2n(R_\Sigma-\epsilon_{n,\Sigma}) }\\
& \leq &  2H(\mathbf{Y}_1^n | M_{12},M_2)-2H(\mathbf{Y}_1^n | \mathbf{X}_1^n)-2H(\mathbf{Y}_2^n| M_{12},M_2)\\
&&+\: 2H(\mathbf{Y}_2^n)+2H(\mathbf{Y}_4^n) +2H(\mathbf{Y}_3^n |M_{34}, M_4)\\
&&-\: 2 H(\mathbf{Y}_3^n |\mathbf{X}_2^n)-2 H(\mathbf{Y}_4^n |M_{34}, M_4)\\
& \overset{(a)}{\leq} & 2n(n_2-n_M)+2n(n_4-n_D) \\
&&+ \: H(\mathbf{Y}_{1}^{n,\downarrow}|M_{12},M_{2},\mathbf{X}_{1}^{n,\uparrow})\\
&&+\: n\Delta_1 +  H(\mathbf{Y}_{3}^{n,\downarrow}|M_{34},M_{4},\mathbf{X}_{2}^{n,\uparrow})+ n\Delta_2\\
& \leq & 2n(n_2-n_M)+2n(n_4-n_D)\\
&&+\: 2n\Delta_1 + nn_M+nn_D+ 2n\Delta_2\IEEEyesnumber\label{BC-Bound}
\end{IEEEeqnarray*} with (a) following from
\begin{IEEEeqnarray*}{rCl}
\IEEEeqnarraymulticol{3}{l}{
2H(\mathbf{Y}_{1}^{n}|M_{12},M_{2})-2H(\mathbf{Y}_2^n| M_{12},M_2)}\\
& \leq &  2 H(\mathbf{Y}_{1}^{n,\downarrow}|M_{12},M_{2},\mathbf{X}_{1}^{n,\uparrow})\\
&&-\: H(\mathbf{X}_{1}^{n,\downarrow} \oplus \mathbf{\hat{X}}_{2}^n| M_{12},M_{2},\mathbf{X}_{1}^{n,\downarrow},\mathbf{X}_{1}^{n,\uparrow})\\\IEEEyesnumber\label{BC-remove}
&&-\:H(\mathbf{X}_{1}^{n,\downarrow}\oplus \mathbf{\hat{X}}_{2}^n| M_{12},M_{2},\mathbf{\hat{X}}_{2}^n,\mathbf{X}_{1}^{n,\uparrow})\\
& \leq & H(\mathbf{Y}_{1}^{n,\downarrow}|M_{12},M_{2},\mathbf{X}_{1}^{n,\uparrow})+n\Delta_{1}
\end{IEEEeqnarray*}
 since $\mathbf{X}_{1}^{n,\downarrow}$ and $\mathbf{\hat{X}}_{2}^n$ are independent. The same can be done with the $\mathbf{Y}_3^n$, $\mathbf{Y}_3^n$ terms.
 Dividing both sides of (\ref{BC-Bound}) by two and taking $n \rightarrow \infty$ results in desired upper bound. 
 
A similar strategy can show the upper bound for the second case ($n_2 \leq n_M+n_D \leq n_1$ and $n_4 \leq n_M+n_D \leq n_3$).\\
\end{proof}

{\bf K-Transmitter IMAC with Very Weak Interference}\\
An extension of the proof for the IMAC upper bound, is the case where each MAC cell has k-transmitters.

\begin{mytheorem}[]
The upper bound for the k-transmitter IMAC very weak interference case is
\begin{equation}
R_{\Sigma}\leq n_1-n_D+n_3-n_m+\frac{(k-1)n_D}{k}+\frac{(k-1)n_M}{k}.
\end{equation}
\end{mytheorem}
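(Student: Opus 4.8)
The plan is to extend the converse of Theorem~1 by treating each cell as a $k$-user MAC whose common parts align at the opposite receiver, and to replace the factor-two argument by a factor-$k$ argument. First I would invoke Fano's inequality and the data-processing inequality exactly as in the two-transmitter case to obtain, for each cell $c\in\{1,2\}$, a bound of the form $n\sum_i R_i \le \sum_c\big[H(\mathbf{Y}_c^n)-H(\mathbf{Y}_c^n\mid \mathbf{X}_{(c)}^n)\big]+n\epsilon_n$, where $\mathbf{X}_{(c)}^n$ collects the $k$ inputs of cell $c$. The conditional term is the entropy of the aligned cross-interference: writing $\hat{\mathbf{X}}_i=\mathbf{X}_{i,[:n_M]}$ for the $n_M$ interfering levels of transmitter $i$ and $\mathbf{S}_1=\bigoplus_{i=1}^{k}\hat{\mathbf{X}}_i$ for the interference cell~$1$ creates at $Rx_2$ (and symmetrically $\mathbf{S}_2$ at $Rx_1$, with $n_D$ levels), one has $H(\mathbf{Y}_2^n\mid \mathbf{X}_{(2)}^n)=H(\mathbf{S}_1^n)$ and $H(\mathbf{Y}_1^n\mid \mathbf{X}_{(1)}^n)=H(\mathbf{S}_2^n)$, since conditioning on a cell's own inputs removes them additively over $\mathbb{F}_2$.

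Next I would scale the whole inequality by $k$ (instead of $2$) and split each received vector into its interference-free part $\mathbf{Y}_c^{\uparrow}$ and its interference-carrying part $\mathbf{Y}_c^{\downarrow}$, bounding $H(\mathbf{Y}_c^{\downarrow})$ by its number of levels. The key generalization is the analogue of step~(a). In the two-transmitter proof the independence of $\hat{\mathbf{X}}_1,\hat{\mathbf{X}}_2$ gives $2H(\mathbf{S}_1)\ge H(\mathbf{S}_1\mid\hat{\mathbf{X}}_1)+H(\mathbf{S}_1\mid\hat{\mathbf{X}}_2)=H(\hat{\mathbf{X}}_1)+H(\hat{\mathbf{X}}_2)$. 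For $k$ transmitters I would condition $\mathbf{S}_1=\bigoplus_i\hat{\mathbf{X}}_i$ on every $(k-1)$-element subset of the mutually independent common parts; since $H(\mathbf{S}_1\mid \hat{\mathbf{X}}_{-i})=H(\hat{\mathbf{X}}_i)$, summing over $i$ yields the central inequality $k\,H(\mathbf{S}_1)\ge\sum_{i=1}^{k}H(\hat{\mathbf{X}}_i)$. This is the mechanism that converts the full interference penalty into a fractional one.

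The delicate point — and the step I expect to be the main obstacle — is pairing this subtracted term with the interference-free entropy $H(\mathbf{Y}_1^{\uparrow})$ so that the saving comes out as $n_M/k$ rather than the naive $(k-1)n_M/k$ that a single size bound would give. The crucial structural fact is the asymmetry between the two receivers: at $Rx_2$ the $k$ contributions arrive at the single strength $n_M$ and collapse into the aligned sum $\mathbf{S}_1$, whereas at $Rx_1$ the direct signals arrive at the \emph{distinct} strengths $n_1\ge n_2\ge\dots\ge n_k$, so $\mathbf{Y}_1^{\uparrow}$ resolves the $\hat{\mathbf{X}}_i$ as a staircase rather than as their single XOR. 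I would make this precise by decomposing $\mathbf{Y}_1^{\uparrow}$ level by level, bounding its entropy above in terms of $\sum_i H(\hat{\mathbf{X}}_i)$ together with the $\Delta_1$ clean levels, and then feeding in $k\,H(\mathbf{S}_1)\ge\sum_i H(\hat{\mathbf{X}}_i)$, so that $k$ copies of the home-cell entropy are offset by only a single aligned interference block. Getting this bit-level bookkeeping exactly right — including the edge alignment of the staircase and the fact that the very-weak condition $n_M+n_D\le\min(n_1,n_3)$ keeps $\mathbf{S}_2$ entirely inside $\mathbf{Y}_1^{\downarrow}$ — is the technical heart of the argument.

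Finally I would collect the two cells, divide by $k$, and let $n\to\infty$. The $n_M$- and $n_D$-dependent contributions combine to $-n_M+\tfrac{(k-1)n_M}{k}=-\tfrac{n_M}{k}$ and likewise for $n_D$, giving $R_\Sigma\le n_1+n_3-\tfrac{n_M}{k}-\tfrac{n_D}{k}$, which is exactly the stated bound. As consistency checks I would verify that it collapses to Theorem~1 at $k=2$ and tends to the interference-free $n_1+n_3$ as $k\to\infty$. The second interference regime $n_2\le n_M+n_D\le n_1$, $n_4\le n_M+n_D\le n_3$ and the mixed cases then follow from the same template with the split points relocated, just as in the two-transmitter proof.
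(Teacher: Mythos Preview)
Your proposal is correct and follows essentially the same route as the paper: Fano's inequality, scaling by $k$, splitting $\mathbf{Y}_c$ into $\mathbf{Y}_c^{\uparrow}$ and $\mathbf{Y}_c^{\downarrow}$, and the pivotal inequality $kH\big(\bigoplus_i\hat{\mathbf{X}}_i\big)\ge\sum_i H(\hat{\mathbf{X}}_i)$ obtained by conditioning on each $(k-1)$-subset, then dividing by $k$ and letting $n\to\infty$. The only difference is cosmetic: the paper does not carry out any level-by-level ``staircase'' decomposition of $\mathbf{Y}_1^{\uparrow}$ but simply records the combined inequality $H(\mathbf{Y}_1^{n,\uparrow})-kH\big(\bigoplus_i\hat{\mathbf{X}}_i^n\big)\le n\Delta_1$ in one line, so your elaboration of that step is more detailed than what the paper actually writes, not a different mechanism.
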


\begin{proof}
To show the upper bound for the case $n_M+n_D \leq n_2$ and $n_M+n_D \leq n_4$ one can take Fano's inequality as the starting point, like in the 2-sender case and obtain:

\begin{IEEEeqnarray*}{rCl}
\IEEEeqnarraymulticol{3}{l}{
n(\sum\limits_{i=1}^{2k} R_i -\epsilon_n) }\\
& \leq & I(\mathbf{X}_1^n,\mathbf{X}_2^n,\cdots ,\mathbf{X}_k^n; \mathbf{Y}_1^n) \\
&&+\: I(\mathbf{X}_{k+1}^n,\mathbf{X}_{k+2}^n,\cdots, \mathbf{X}_{2k}^n; \mathbf{Y}_2^n)\\
& = & H(\mathbf{Y}_1^n)-H(\mathbf{Y}_1^n| \mathbf{X}_1^n, \mathbf{X}_2^n, \cdots, \mathbf{X}_k^n)+H(\mathbf{Y}_2^n)\\
&&-\: H(\mathbf{Y}_2^n|\mathbf{X}_{k+1}^n, \mathbf{X}_{k+2}^n, \cdots, \mathbf{X}_{2k}^n)
\end{IEEEeqnarray*}

adding k of them will lead to the following bound:
\begin{IEEEeqnarray*}{rCl}
\IEEEeqnarraymulticol{3}{l}{
kn(\sum\limits_{i=1}^{2k} R_i -\epsilon_n)}\\
&\leq & kH(\mathbf{Y}_1^n)-kH(\mathbf{Y}_1^n| \mathbf{X}_1^n,\mathbf{X}_2^n,\cdots ,\mathbf{X}_k^n)+kH(\mathbf{Y}_2^n)\\
&&- \: kH(\mathbf{Y}_2^n|\mathbf{X}_{k+1}^n,\mathbf{X}_{k+2}^n,\cdots, \mathbf{X}_{2k}^n)\\
& = & k H(\mathbf{Y}_1^{n,\downarrow})+k H(\mathbf{Y}_1^{n,\uparrow})-kH(\bigoplus\limits_{i=1}^{k}\mathbf{\hat{X}}_i^n )+ k H(\mathbf{Y}_2^{n,\downarrow})\\
&&+ \: k H(\mathbf{Y}_2^{n,\uparrow})-kH(\bigoplus\limits_{i=k+1}^{2k}\mathbf{\hat{X}}_i^n)\\ 
& \leq & kn(n_2-n_M) + (k-1)H(\mathbf{Y}_1^{n,\uparrow})+ n\Delta_1 + kn(n_3-n_D)\\
&& +\: (k-1)H(\mathbf{Y}_2^{n,\uparrow})+n\Delta_2\\ 
& \leq & kn(n_2-n_M)+kn\Delta_1 + (k-1)nn_M + kn(n_3-n_D)\\
&& +\: kn\Delta_2 + (k-1)nn_D
\end{IEEEeqnarray*}
where we used that $H(\mathbf{Y}_{1}^{n,\uparrow})-kH(\bigoplus\limits_{i=1}^{k}\mathbf{\hat{X}}_{i}^n)\leq n(\Delta_{1})$ and the same for the $\mathbf{Y}_{2}^{n,\uparrow}$ terms. 
Dividing by nk and taking $n \rightarrow \infty$ yields the upper bound. 

The upper bound for the case ($n_2 \leq n_M+n_D \leq n_1$ and $n_4 \leq n_M+n_D \leq n_3$) follows on the same lines as in the 2-sender IMAC case, applied to k-senders as above. Since the proof can be split for the two sub systems, the proof for the mixed case can be shown in a similar manner.

\end{proof}

This bound can be achieved under the condition that all k-transmitters have different signal strength (i.e. bit-level amount) and every transmitter utilizes the highest $\Delta_i$ part of his signal. In this case the amount of interference at the opposite cell equals the largest $\Delta_i$. The minimum of interference can be reached if every one of the k-transmitters utilizes $\frac{n_i}{k}$ of the top most part, where $n_i$ stands for the interference strength at the opposite cell. Therefore, the $\Delta_i$-shifts must also be equally distributed to reach the maximum rate.
As the number of transmitters, k, is growing, more interfering signals align at a smaller range of bit levels. Therefore the sum capacity is approaching the interference-free sum capacity. The optimum case and maximum reachable rate in the linear deterministic case, would be to have the same amount of transmitters as bit-levels of the interference strength, which are equally distributed (1 bit minimum shift). In this case, the interference at an arbitrary signal strength in one cell would be 1 bit.

\section{Comparison of Cellular Systems with the Interference Channel}\label{Dis}

As already partially mentioned above, the interference channel was investigated in numerous papers for different interference regimes and scenarios. The question that now arises is if cellular channels are much different and why practically no results exists for cellular channels of the type mentioned in this paper. One can answer this question by looking into the differences of the achievable schemes and the proofs for converse arguments using the example of the MAC-P2P and IMAC system. For example in the very weak interference regime, the achievable scheme for the interference channel relies on relatively simple Han-Kobayashi like schemes. These schemes either make the entire signal private information or transmit only on the levels, which are not affected by noise \cite{Bresler2008}. The latter strategy imitates the technique of treating interference as noise. Examples for these coding schemes are relatively easy constructed and some can be found for instance in \cite{Bresler2008}.

In contrast, the MAC-P2P, IMAC and IBC cellular systems need a rather complicated interference alignment strategy, where the interference of two signals align on bit levels \cite{Buhler2012}. Here the shift between the direct signals in a cell needs to be exploited for the alignment. This shift is the main ingredient which enables the IMAC and IBC system to achieve higher rates than the interference channel. As pointed out above for the k-sender case, many senders can yield even higher gains and potentially approach interference-free sum capacity. An example for a scheme which exploits the signal-shift can be found in figure \ref{MAC_MAC_SCHEMA}. We therefore need to enable 
the {\it multi-user gain} \cite{Suh2008}, which is the main difference to the interference channel. 
Other advanced cellular systems probably need more complicated strategies to achieve the bounds and utilize this gain, where exploitation of shift-properties need to be combined with copy-bits. The need for strategies where codes utilize bit-level alignment of signals in addition to copy-bits can be seen with a relatively simple example for the symmetrical IMAC system where the direct signals equal the interference $n_1=n_i$ with $\alpha=1$. In the interference channel, this point constitutes the second minimum point of the w-curve, where the rate falls back to the single link rate. But for the IMAC system, a higher rate can be reached. Figure \ref{MAC-MAC_Schema_3} gives an example for the case of $n_1=4$.

\begin{figure}
\centering
\includegraphics[scale=0.9]{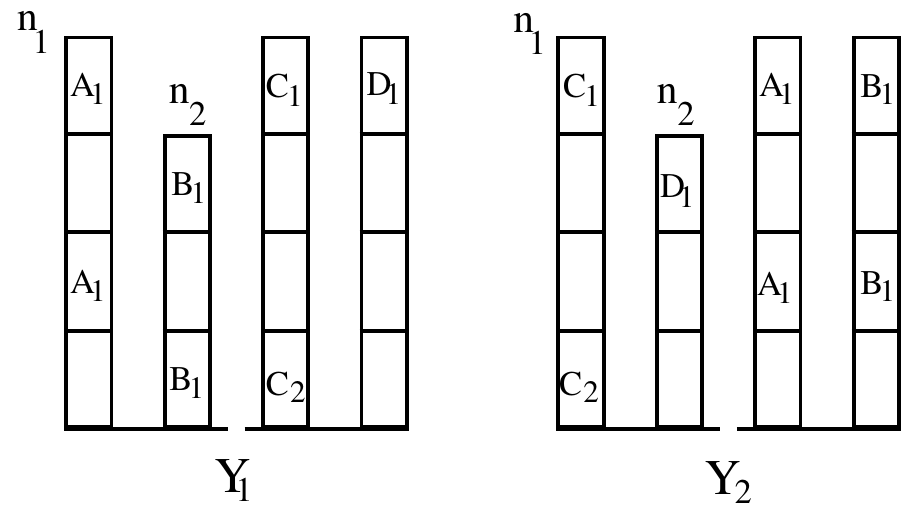}
\caption{A coding scheme for the case of $n_1=n_i=4$ with $\alpha=1$. $A_1$ and $B_1$ are used as copy-bits and the shift-property is used at both sides. The achievable rate of the example is 5 bit level whereas the upper bound for the interference channel lies at 4 bit level.}
\label{MAC-MAC_Schema_3}
\end{figure}

 This features make investigations in cellular systems challenging but also provide a chance for new coding schemes with higher gains for future networks.

\section{Conclusions}
\begin{figure}
\centering
\includegraphics[scale=0.42]{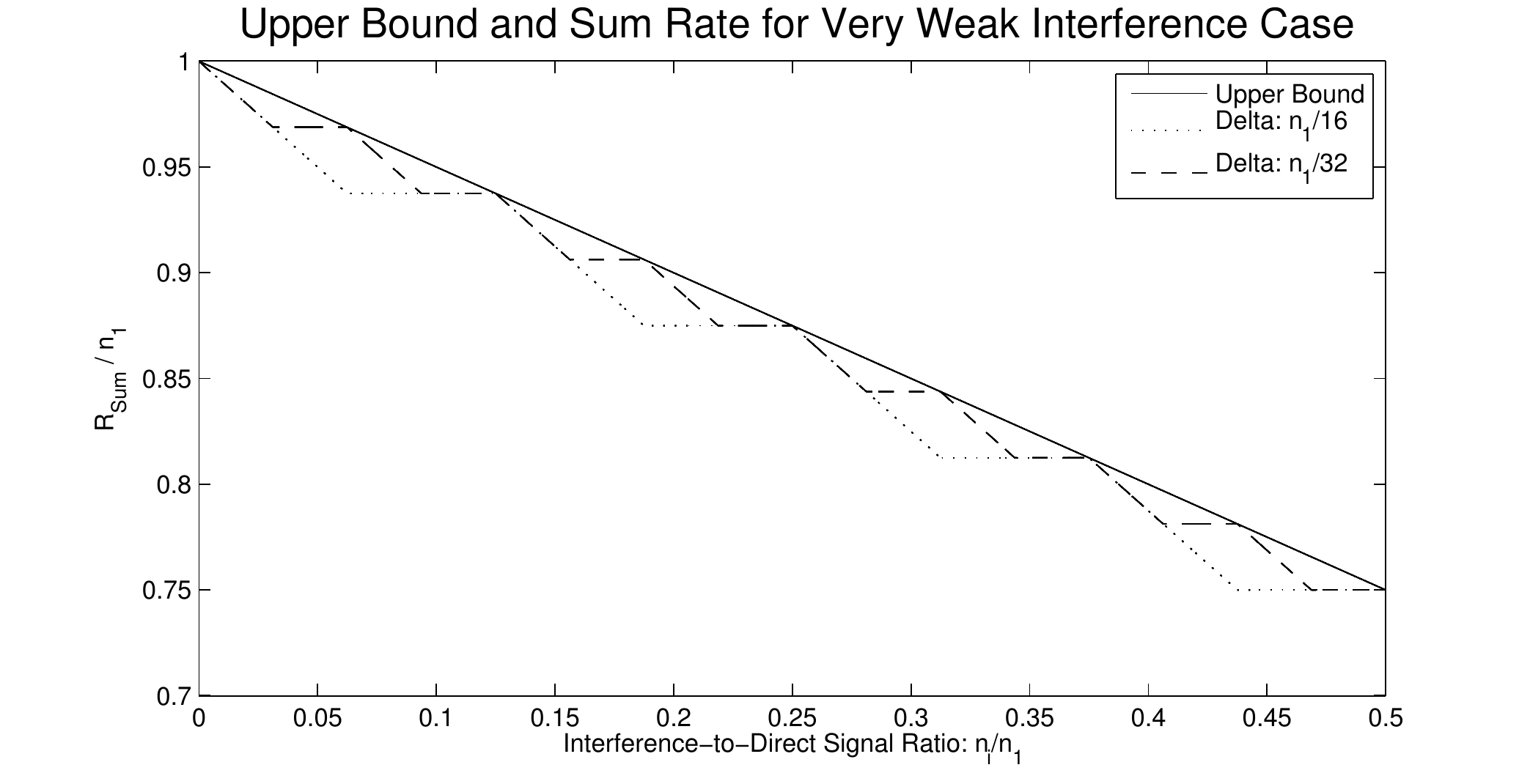}
\caption{The w-curve for the achievable schemes with $\Delta=\frac{n_1}{16}$ and $\Delta=\frac{n_1}{8}$ with the upper bound per cell. All rates and bounds are measured symmetrical, presenting a single link. Direct signals and interference is assumed to be symmetrical for clear presentation. As one can see, the achievable schemes reach the upper bound in the case when $n_i \div \Delta =$ even. The largest distance, at the peak of the triangle, is $\Delta$. As $\Delta$ becomes smaller, the achievable scheme gets closer to the upper bound.}
\end{figure}
We have shown achievable schemes and upper bounds for specific types of cellular channels in the very weak interference regime. We remark that the restriction to the very weak interference regime range is only for clarity of the exposition and future work will treat further interference ranges. For particular channel parameters the achievable schemes and upper bounds coincide and provide the sum capacity. As in the MAC-P2P case, one can see that the power-shift between the two direct signals in the cells, can be exploited to align interference at the other cell and therefore achieving higher rates than the interference channel. In other words, signal-level alignment was used to enable {\it multi-user gain}. The resulting sum capacities are the same for the IMAC and IBC and therefore show a duality relationship between them. We have also shown an expansion of the converse proof technique to the k-transmitter cellular channel, where the interference-free capacity is approached with a growing k. Further work will not only study broader interference regimes but also investigate connections to the Gaussian case, trying to accomplish a constant bit-gap result.

\bibliographystyle{./IEEEtran}
\bibliography{./ref}

\end{document}